\documentclass[12pt]{article}

\usepackage[english]{babel}

\usepackage[a4paper,top=2cm,bottom=2cm,left=3cm,right=3cm,marginparwidth=1.75cm]{geometry}

\usepackage{amsmath}
\usepackage{graphicx}
\usepackage{yfonts}
\usepackage{amssymb}
\usepackage{amsthm,bm}
\usepackage{comment}
\usepackage{bbm}
\usepackage{mathtools}
\newtheorem{thm}{Theorem}[section]
\newtheorem{co}[thm]{Corollary}
\newtheorem{lem}[thm]{Lemma}

\newtheorem{assumption}[thm]{Assumption}

\newtheorem{pr}[thm]{Proposition}

\newtheorem{definition}[thm]{Definition}

\newtheorem{example}[thm]{Example}

\newtheorem{remark}[thm]{Remark}

\usepackage[colorlinks=true, allcolors=blue]{hyperref}

\newcommand{\E}{\mathbb{E}}

\newcommand{\R}{\mathbb R}

\newcommand{\IP}[2]{\langle #1,#2\rangle}

\title{A Hardy-Space Proof of the Filter-Only Gaussian Feedback-Capacity Formula}
\author{\footnotesize \centering\begin{tabular}{ccc}
Jun Su & Yang Xu & Guangyue Han\\
The University of Hong Kong & Fudan University &The University of Hong Kong\\
 email:  junsu@hku.hk&email:  xuyyang@fudan.edu.cn  & email:  ghan@hku.hk\\
\end{tabular}}
\begin{document}
\maketitle

\begin{abstract}
The feedback capacity of power-constrained channels with additive stationary Gaussian noise was formulated by Kim in \cite{kim2010feedback} as an infinite-dimensional optimization over the spectrum of an independent stationary Gaussian component and a strictly causal feedback filter. Kim further asserted that the independent stationary component can be omitted. However, Derpich and \O stergaard \cite{derpich2022comments} identified a gap in the original proof, leaving the original derivation of the filter-only capacity formula incomplete. In this paper, we establish the filter-only formula at the level of capacity suprema for noise spectra satisfying the Paley–Wiener condition. In particular, we show that polynomial feedback filters suffice in the capacity supremum, without a positive lower bound on the noise spectrum or a rationality assumption. Under a positive essential lower bound, a complementary approximation preserves the output spectrum and rate exactly, while its input power converges to that of the original pair. An interleaved AR(1) example connects these results with the SK(2) coding scheme. The argument does not assert attainment by a single filter.

\end{abstract}

\noindent\textbf{Keywords:}\quad Gaussian feedback capacity, Paley--Wiener condition, Hardy spaces, spectral factorization, finite Blaschke products.

\section{Introduction} \label{sec:introduction}
We consider the real additive Gaussian channel with noiseless feedback
\begin{equation}\label{real-ACGN-channel}
 Y_i=X_i(M,Y_1^{i-1})+Z_i,\qquad i=1,2,\ldots,
\end{equation}
where $M$ denotes the message to be communicated through the channel, the noise $\{Z_i\}$, which is independent of $M$, is a zero-mean stationary Gaussian process, and $X_i$, the channel input at time $i$, may depend on $M$ and previous channel outputs $Y_1^{i-1}$. The encoder observes the previous outputs, and the channel input $X=\{X_i\}$ satisfies the following average power constraint: there is $P > 0$ such that for all $n$,
$$
\frac{1}{n}\sum_{i=1}^n\E[X_i^2]\le P.
$$
Let $C_\mathrm{FB}(P)$ denote the capacity of the channel \eqref{real-ACGN-channel}, which is often referred to as {\em Gaussian feedback capacity} in the literature. All logarithms are natural, so rates are measured in nats per channel use.

The finite-block Gaussian feedback-capacity formula of Cover and Pombra \cite{cover1989gaussian} leads to an optimization problem over a message-bearing Gaussian component and a strictly causal linear feedback transformation. Kim \cite[Theorem 3.2]{kim2010feedback} obtained a stationary variational characterization with the same two ingredients and subsequently stated that the independent stationary component could be removed \cite[Corollary 4.4 and Theorem 4.6]{kim2010feedback}. However, Derpich and \O stergaard \cite{derpich2022comments} identified a gap in the proof of that statement. This gap concerns the component-removal step; Kim's stationary mixed variational characterization remains the starting point of the present work.

In this paper, we prove the filter-only Gaussian feedback-capacity formula at the level of suprema under the Paley--Wiener condition. This condition allows the noise spectrum to approach zero and does not require a rational spectrum. More specifically, write $C_0(P)$ for the filter-only supremum, we establish $C_0(P)\ge C_\mathrm{FB}(P)$; the reverse inequality simply follows by setting $S_V\equiv 0$ in Kim's stationary variational formula. The key observation is that the independent component is the defect of a Schur function from innerness; replacing the Schur function by suitable inner approximants preserves the output spectrum and asymptotically preserves the power usage. In particular, Theorem~\ref{theorem:pw-counterpart} formulates the resulting approximation counterpart of \cite[Corollary 4.4]{kim2010feedback}, which allows arbitrarily small power excess and rate loss for every admissible mixed pair. Corollary~\ref{app-cor:ess-inf-cod} gives the complementary conclusion of exact output-spectrum preservation when the noise spectrum is bounded away from zero. Neither statement, by itself, replaces an optimizer at exactly the same power. Theorem~\ref{thm:main} establishes the equality of capacity suprema at the average power constraint $P$.

Related work by Fay and Sabag \cite{fay2026feedback} addresses exact component removal at optimizers of a finite-dimensional convex capacity formulation and explicit Schalkwijk--Kailath (SK) coding for nonwhite rational noise spectra under their state-space assumptions. Our result covers the general Paley--Wiener class and concerns equality of suprema, without an assertion of optimizer attainment or a universal fixed-dimensional optimal coding scheme. Section~\ref{sec:sk2-special_channels} briefly recalls an interleaved AR(1) example and its connection with the SK(2) construction of \cite{su2026secondorder,su2026generalized}.


\subsection{The Variational Problem}
Let $\mathbb{D}=\{z\in\mathbb C:|z|<1\}$ and $\mathbb{T}=\{z\in\mathbb C:|z|=1\}$, with normalized measure $d\mu=d\theta/(2\pi)$. We assume that the noise process has a spectral density function (SDF) $S_Z\in L^1(\mathbb{T})$. Since the process is real, $S_Z$ is nonnegative and even: $S_Z(e^{i\theta})=S_Z(e^{-i\theta})$ almost everywhere. Throughout, we impose the Paley--Wiener condition, also called Szeg\H{o} condition,
\begin{equation}\label{eq:PW-cond-All}
  \log S_Z\in L^1(\mathbb{T}).
\end{equation}
In \cite[Section IV]{kim2010feedback}, Kim imposed the stronger condition that $S_Z$ is bounded away from zero, namely,
\begin{equation}\label{eq:ess-inf-cond}
\operatorname{ess\,inf}_{{\mathbb{T}}} S_Z\ge\delta>0.
\end{equation}
Indeed, \eqref{eq:ess-inf-cond} implies \eqref{eq:PW-cond-All}, but the converse fails; for example, $S_Z(e^{i\theta})=|1-e^{i\theta}|^2$ satisfies \eqref{eq:PW-cond-All} but not \eqref{eq:ess-inf-cond}. 

We consider real strictly causal filters of the form
\begin{equation}\label{eq:Bclass}
 B(z)=\sum_{k=1}^{\infty}b_kz^k\in z\mathcal{H}^2,\qquad b_k\in\R,
\end{equation}
where Hardy-space notation is recalled below. Following the spirit of the exposition in \cite{kim2010feedback}, we require finite weighted power 
$$
\int_{\mathbb{T}}|B|^2S_Zd\mu<\infty
$$
and let $X=V+BZ$, where $V=\{V_i\}$ is a zero-mean stationary Gaussian process independent of the noise process $\{Z_i\}$ and has a nonnegative even SDF $S_V\in L^1(\mathbb{T})$. Here $BZ$ denotes the mean-square spectral filtering of $Z$ with multiplier $B$. The channel input and output SDFs are
\begin{equation}\label{eq:sdfs-in-out-put}
 S_X=S_V+|B|^2S_Z,\qquad S_Y=S_V+|1+B|^2S_Z.
\end{equation}
Thus $S_V$ stands for the independent stationary component, whereas $S_Z$ describes the channel noise. In particular, $S_V$ may vanish. Define
\begin{align}
 \mathcal{P}(S_V,B)&=\int_{\mathbb{T}}(S_V+|B|^2S_Z)d\mu,\label{eq:power}\\
 \mathcal{R}(S_V,B)&=\frac12\int_\mathbb{T}\log\frac{S_Y}{S_Z}d\mu.\label{eq:rate}
\end{align}
A pair $(S_V,B)$ satisfying these conditions is called admissible. For an admissible $(S_V,B)$, Lemma~\ref{lamma-SY:label} shows that the rate $\mathcal{R}(S_V,B)$ is finite and nonnegative.

Under \eqref{eq:PW-cond-All}, Kim's stationary variational characterization \cite[Theorem 3.2]{kim2010feedback} gives
\begin{equation}\label{eq:Kim}
 C_\mathrm{FB}(P)= \sup_{S_V,B}\mathcal{R}(S_V,B),
\end{equation}
where the supremum is taken over all nonnegative, even, SDFs $S_V(e^{i\theta})\in L^1(\mathbb{T})$ and all strictly causal filters $B$ as in \eqref{eq:Bclass} satisfying the power constraint $\mathcal{P}(S_V,B)\le P$. In particular, $S_V$ is even and integrable on $\mathbb{T}$. We consider the following filter-only supremum 
\begin{equation}\label{eq:C-zero}
 C_0(P)=\sup_B\int_\mathbb{T}\log|1+B|d\mu,
\end{equation}
where the supremum is taken over all strictly causal filters $B$ as in \eqref{eq:Bclass} satisfying the power constraint
$$
\int_\mathbb{T}|B|^2S_Zd\mu\le P.
$$
Letting $S_V\equiv 0$ in \eqref{eq:Kim} immediately yields
\begin{equation}\label{ine-C0-CFB}
C_0(P)\le C_\mathrm{FB}(P).
\end{equation}
The main objective is to prove the reverse inequality, $C_0(P)\ge C_\mathrm{FB}(P)$, under \eqref{eq:PW-cond-All}. We will also show that real strictly causal polynomials suffice in the supremum defining $C_0(P)$, with no fixed bound on their degree. The zero polynomial is included in this class.

\subsection{Hardy-Space Preliminaries}
Let $f(z)=\sum_{k=0}^\infty c_kz^k$ be an analytic function on $\mathbb{D}$. We say that $f$ belongs to the class $\mathcal{H}^p=\mathcal{H}^p(\mathbb{D})$, $1\le p<\infty$, if 
$$
 \|f\|_{\mathcal{H}^p}=\sup_{0<r<1}\left(\int_{-\pi}^\pi|f(re^{i\theta})|^p\frac{d\theta}{2\pi}\right)^{1/p}<\infty.
$$
The space $\mathcal{H}^\infty$ consists of bounded analytic functions, with the supremum norm. Hardy functions have radial boundary values almost everywhere, denoted by the same symbol as the analytic function. For simplicity, we use $\| f\|_2=\| f\|_{\mathcal{H}^2}$, $\IP{f}{g}_{\mathcal{H}^2}=\int_\mathbb{T}f\overline gd\mu$, and $z\mathcal{H}^2=\{f\in \mathcal{H}^2:f(0)=0\}$. We say that $f$ has real Taylor coefficients if and only if $f(\bar z)=\overline{f(z)}$. In particular, for $f(z)\in\mathcal{H}^2$, it holds that
\begin{equation}
\| f\|_2^2 =\sum_{k\ge 0}|c_k|^2,\qquad |f(z)|\le \frac{\| f\|_2}{\sqrt{1-|z|^2}},
\end{equation}
where the first identity follows from \cite[Theorem 17.12]{rudin1987} and the second follows from Cauchy--Schwarz inequality. In particular, convergence in $\mathcal{H}^2$ implies local uniform convergence in $\mathbb{D}$ \cite[Remark 17.8(c)]{rudin1987}.

An \emph{outer} function in $\mathcal{H}^p$ has the form
\begin{equation}\label{eq:outer-def}
 Q(z)=c\exp\left\{\int_{-\pi}^\pi\frac{e^{i\theta}+z}{e^{i\theta}-z} \log \omega(e^{i\theta})\frac{d\theta}{2\pi}\right\}, \quad |c|=1,
\end{equation}
where $\omega>0$ almost everywhere, $\omega\in L^p$, and $\log \omega\in L^1$. The outer function $Q(z)$ has no zeros in $\mathbb{D}$ and has boundary modulus $\omega$. The outer function $Q(z)$ is called normalized if $Q(0)>0$, equivalently $c=1$; see \cite[Sections 17.14--17.17]{rudin1987}.  In particular, if $S>0$ almost everywhere, $S\in L^1$, and $\log S\in L^1$, then its unique normalized outer spectral factor is
\begin{equation}\label{eq:outer-factor}
 \mathcal O_S(z)=\exp\left\{\frac12\int_{-\pi}^\pi \frac{e^{i\theta}+z}{e^{i\theta}-z}\log S(e^{i\theta})\frac{d\theta}{2\pi} \right\},
 \qquad |\mathcal O_S|^2=S \quad a.e., ~\mathrm{on}~\mathbb{T}.
\end{equation}
The function $\mathcal O_S(z)\in\mathcal{H}^2$ is often called the Szeg\H{o} function (see, e.g., \cite[Section 2.4]{simon2005}). It follows from \cite[Theorem 2.4.1]{simon2005} that, setting $H=\mathcal O_{S_Z}$, we have $S_Z=|H|^2$ and $H(0)>0$. If $S$ is even, $\mathcal{O}_S$ has real Taylor coefficients. For an outer function $H\in\mathcal{H}^2$, it holds that
\begin{equation}
  \overline{\{pH : p \text{ is a polynomial}\}}^{\mathcal{H}^2} = \mathcal{H}^2.
\end{equation}
(see, e.g., \cite[Theorem 17.23]{rudin1987} or \cite[Theorem 7.4]{duren1970theory}). 

A nonzero function $f\in\mathcal{H}^p$ has $\log|f|\in L^1(\mathbb{T})$ \cite[Theorem 17.17]{rudin1987}. The Smirnov class $\mathcal{N}^+$ has the quotient description
$$
 \mathcal{N}^+=\{a/b:a,b\in \mathcal{H}^\infty,\ b~~\text{is outer}\}.
$$
This is equivalent to its canonical-factorization definition in \cite[Section 2.5]{duren1970theory}. In particular, the canonical factorization theorem gives $\mathcal{H}^p\subset \mathcal{N}^+$, and the Smirnov theorem \cite[Theorem 2.11]{duren1970theory} states that a function in $\mathcal{N}^+$ with boundary values in $L^p$ belongs to $\mathcal{H}^p$. In addition, the maximum principle of V.I. Smirnov \cite[Lemma 2.3]{katsnelson1997theory} states that if $f\in \mathcal{N}^+$ and $|f|\le1$ almost everywhere on $\mathbb{T}$, then $f\in \mathcal{H}^\infty$ and $\|f\|_\infty\le1$.

 We say that $\phi$ is a Schur function if $\phi\in \mathcal{H}^\infty$ with $\|\phi\|_\infty\le1$; it is \emph{inner} if $|\phi|=1$ almost everywhere on $\mathbb{T}$. For a finite sequence $a_1,a_2,\cdots,a_N\in \mathbb{D}$ and $\zeta\in\mathbb{T}$, the function
$$
\Theta(z)=\zeta \prod_{k=1}^N \frac{z-a_k}{1-\overline{a_k}z},
$$
is a \emph{finite Blaschke product}. The finite Blaschke product $\Theta $ is a rational inner function with zeros at the $a_j$ , and nowhere else \cite[Theorem 15.21 and the subsequent discussion]{rudin1987}. The case $N=0$ is a unimodular constant (constant of modulus one). Their approximation of Schur functions is recalled in Proposition~\ref{pr2-3:label}.

Finally, we use the following form of Jensen's formula \cite[Theorem 15.18]{rudin1987}. 
\begin{thm}[Jensen's Formula]
Let $U$ be an open subset of the complex plane $\mathbb{C}$ containing $\overline{\mathbb{D}}$. Let $g:U\rightarrow \mathbb{C}$ be an analytic function, and let $ z_1, z_2, \cdots , z_n$ denote the zeros of $g$ in $\mathbb{D}$ repeated according to multiplicity. Suppose that $g(0) \neq0$. Then, we have
\begin{equation}\label{eq:jensen-formula}
\log |g(0)| = \sum_{k=1}^{n} \log (|z_k|) +  \int_{-\pi}^{\pi} \log |g(e^{i\theta})| \, \frac{d\theta}{2\pi}.
\end{equation}
\end{thm}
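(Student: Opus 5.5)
The plan is the classical factorization argument: strip off the zeros of $g$ in the closed disc, so that what remains is zero-free and $\log|\cdot|$ of it is harmonic. First, $g\not\equiv 0$ because $g(0)\neq0$, so its zeros are isolated. Hence $g$ has only finitely many zeros in the compact set $\overline{\mathbb D}$. Besides the zeros $z_1,\dots,z_n\in\mathbb D$ there may be zeros $w_1,\dots,w_m\in\mathbb T$, repeated according to multiplicity. I would write
$$
g(z)=\prod_{k=1}^n\frac{z-z_k}{1-\overline{z_k}z}\;\prod_{j=1}^m (z-w_j)\; h(z).
$$
Here $h$ is analytic and zero-free on some open disc $D(0;R)$ with $R>1$. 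This needs a little care: shrink $U$ to a disc slightly larger than $\overline{\mathbb D}$ that contains no further zeros of $g$, and note that the poles $1/\overline{z_k}$ lie outside $\overline{\mathbb D}$. On that disc $\log|h|$ is harmonic, so the mean value property gives $\log|h(0)|=\int\log|h(e^{i\theta})|\,d\theta/2\pi$.

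Next I would evaluate each factor on $\mathbb T$ and at $0$. Each Blaschke factor has modulus one on $\mathbb T$, so it contributes nothing to the integral, and it contributes $\log|z_k|$ at $0$. Each boundary factor contributes $\log|w_j|=0$ at $0$. Its integral is
$$
\int_{-\pi}^{\pi}\log|e^{i\theta}-w_j|\frac{d\theta}{2\pi}=\int_{-\pi}^{\pi}\log|1-e^{i\theta}|\frac{d\theta}{2\pi},
$$
by rotation invariance. Adding the logarithms of the moduli of all factors, both at $0$ and on $\mathbb T$, then yields exactly \eqref{eq:jensen-formula}.

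The only genuinely non-routine step is showing that this last integral equals $0$; I expect it to be the main obstacle. I would argue as follows. For $0<r<1$, the function $\log|1-rz|$ is harmonic on a neighbourhood of $\overline{\mathbb D}$, so $\int\log|1-re^{i\theta}|\,d\theta/2\pi=0$. As $r\uparrow1$ we let $r\to1$ under the integral. This is justified because $|1-re^{i\theta}|\ge \tfrac12|1-e^{i\theta}|$, which gives an integrable lower bound since $\log|1-e^{i\theta}|\sim\log|\theta|$ near $0$, and because $|1-re^{i\theta}|\le2$ gives an upper bound. Dominated convergence then gives the value $0$.

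The same integrable-singularity estimate shows that $\log|g|\in L^1(\mathbb T)$ when boundary zeros are present, so the right-hand side of \eqref{eq:jensen-formula} is well defined. If $g$ has no zeros on $\mathbb T$, this step is not needed at all.
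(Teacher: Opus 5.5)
Your argument is correct and is essentially the proof the paper relies on. The paper does not reprove the formula but cites Rudin's Theorem 15.18. That proof likewise factors out Blaschke factors for the zeros in $\mathbb{D}$ and linear factors for those on $\mathbb{T}$, applies the mean value property to $\log|h|$, and uses Lemma 15.17, $\int_{\mathbb{T}}\log|1-e^{i\theta}|\,d\mu=0$.

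The only variation is in how you prove that last identity. You let $r\uparrow 1$ in $\int_{\mathbb{T}}\log|1-re^{i\theta}|\,d\mu=0$, using the bound $|1-re^{i\theta}|\ge\tfrac12|1-e^{i\theta}|$ and dominated convergence. Rudin instead uses an indented-contour argument. Your substitute is valid.
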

In particular, zeros on $\mathbb{T}$ have integrable logarithmic singularities and contribute zero to this identity \eqref{eq:jensen-formula} \cite[Lemma 15.17]{rudin1987}.

\section{Main results}\label{sec:main}
\begin{thm}\label{thm:main}
Under the Paley--Wiener assumption \eqref{eq:PW-cond-All}, for every $P>0$,
\begin{equation}\label{eq:main}
    C_\mathrm{FB}(P) =C_0(P).
\end{equation}
Consequently, the independent stationary spectrum may be set to zero in the feedback-capacity supremum. Moreover, the supremum defining $C_0(P)$ is unchanged when $B$ is restricted to real strictly causal polynomials, with no fixed bound on their degree.
\end{thm}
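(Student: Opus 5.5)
We prove $C_0(P)\ge C_\mathrm{FB}(P)$; the reverse inequality is \eqref{ine-C0-CFB}. The idea is to absorb the independent component $S_V$ into an inner factor. Fix an admissible pair $(S_V,B)$ with $\mathcal P(S_V,B)\le P$.

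\textbf{Step 1: spectral factors and the Schur function $\varphi$.} Let $H=\mathcal O_{S_Z}$. By Lemma~\ref{lamma-SY:label}, $\log S_Y\in L^1$, so $G=\mathcal O_{S_Y}$ is defined; both $G$ and $H$ are outer with real coefficients and positive value at $0$. Set
$$\varphi=(1+B)H/G .$$
Since $G$ is outer, $\varphi\in\mathcal N^+$. Since $|1+B|^2S_Z\le S_Y$ on $\mathbb T$, we have $|\varphi|\le1$ a.e. Smirnov's maximum principle then shows that $\varphi$ is a Schur function with real coefficients. It satisfies
$$S_V=|G|^2(1-|\varphi|^2),\qquad \varphi(0)=H(0)/G(0)\in(0,1],$$
where $\varphi(0)>0$ because $B(0)=0$. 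Thus $S_V$ measures exactly how far $\varphi$ is from being inner.

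A direct expansion of the power gives
$$\mathcal P(S_V,B)=\|G\|_2^2-2\,\mathrm{Re}\IP{\varphi G}{H}_{\mathcal H^2}+\|H\|_2^2 ,$$
which depends on $\varphi$ only through the linear term $\IP{\varphi G}{H}_{\mathcal H^2}$.

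\textbf{Step 2: inner approximation.} Using Proposition~\ref{pr2-3:label}, I would take finite Blaschke products $\Theta_n\to\varphi$ locally uniformly in $\mathbb D$. I will require them to have conjugation-symmetric zeros and to satisfy $\Theta_n(0)=\varphi(0)$. If the proposition does not directly give the normalization at $0$, I would fix it by a disc automorphism composed on the outside, which converges to the identity.

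Define the limit object $F_n=\Theta_nG-H\in z\mathcal H^2$. Boundedness together with pointwise convergence gives $\Theta_nG\to\varphi G$ weakly in $\mathcal H^2$. Hence
$$\|F_n\|_2^2=\|G\|_2^2-2\,\mathrm{Re}\IP{\Theta_nG}{H}_{\mathcal H^2}+\|H\|_2^2\longrightarrow \mathcal P(S_V,B).$$
The would-be filter $B_n=\Theta_nG/H-1$ satisfies $|1+B_n|^2S_Z=S_Y$, but it need not lie in $\mathcal H^2$ when $S_Z$ is not bounded below. So I do not use $B_n$ itself.

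\textbf{Step 3: polynomial realization.} The space $zH\cdot\{\text{polynomials}\}$ is dense in $z\mathcal H^2$, because $zH$ is $z$ times an outer function. Taking real parts of coefficients, I choose real polynomials $p_m$ with $p_m(0)=0$ and $p_mH\to F_n$ in $\mathcal H^2$. Then:
\begin{itemize}
\item the weighted power satisfies $\int|p_m|^2S_Z\,d\mu=\|p_mH\|_2^2\to\|F_n\|_2^2$;
\item $(1+p_m)H\to\Theta_nG$ locally uniformly in $\mathbb D$.
\end{itemize}

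\textbf{Step 4: the rate via zero counting (the main obstacle).} The rate functional $\int\log|\cdot|$ is not continuous under $\mathcal H^2$ convergence, so lower semicontinuity of the rate is the step I expect to be hardest. Jensen's formula resolves it. Since $1+p_m$ is a polynomial with value $1$ at $0$,
$$\int_{\mathbb T}\log|1+p_m|\,d\mu=\sum_{1+p_m(z_k)=0,\ |z_k|<1}\log\frac1{|z_k|}\ \ge 0,$$
and every term in this sum is nonnegative. By Hurwitz's theorem on a disc $|z|\le r<1$ containing all zeros of $\Theta_n$, and using that $G$ and $H$ are zero-free, the zeros of $1+p_m$ there converge to the zeros $a_k$ of $\Theta_n$. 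Zeros outside that disc only add nonnegative terms, so
$$\liminf_{m\to\infty}\int_{\mathbb T}\log|1+p_m|\,d\mu\ \ge\ \sum_k\log\frac1{|a_k|}=-\log|\Theta_n(0)|=\log\frac{G(0)}{H(0)}=\mathcal R(S_V,B).$$
The last equality holds because $\log G(0)=\tfrac12\int\log S_Y\,d\mu$ for an outer factor.

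\textbf{Step 5: removing the power excess.} The construction so far yields real polynomial filters with rate at least $\mathcal R(S_V,B)-\varepsilon$ and power at most $P+\varepsilon$. To remove the excess, I apply the construction to near-optimal pairs at a slightly smaller power $P'<P$. Then I use that $C_\mathrm{FB}$ is concave and nondecreasing in the power, hence continuous on $(0,\infty)$, so $C_\mathrm{FB}(P')\to C_\mathrm{FB}(P)$ as $P'\uparrow P$. This yields $C_0(P)\ge C_\mathrm{FB}(P)$, and shows that the supremum is already approached along real strictly causal polynomials. If $S_Y$ happened to coincide with $S_Z$, the zero polynomial handles that case trivially.
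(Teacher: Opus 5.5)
Your Steps 1--4 are the paper's argument. Step 1 is Lemma~\ref{lamma-SY:label}, Step 2 is Proposition~\ref{pr2-3:label}, and Steps 3--4 reproduce the proof of Theorem~\ref{theorem:pw-counterpart} in Appendix~\ref{appendix:Paley-Wiener Condition}. Your Hurwitz argument on one disc is equivalent to the paper's Rouch\'e argument on small disjoint discs around the zeros of $\Theta$.

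The real difference is Step 5. You absorb the power excess globally: you run the construction at $P'<P$ and appeal to continuity of $C_\mathrm{FB}$. The paper works pair by pair instead. It replaces $(S_V,B)$ by $(tS_V+t(1-t)|B|^2S_Z,\,tB)$, which has power $tq$ and output spectrum $tS_Y+(1-t)S_Z$, hence rate at least $tR$. An excess $(1-t)P/2$ then still fits under $P$, and no property of $C_\mathrm{FB}$ is used.

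Your route is valid, but it rests on two facts you assert without proof: $C_\mathrm{FB}$ is finite and concave.
\begin{itemize}
\item Concavity is not automatic for a channel with memory and feedback. From \eqref{eq:Kim} it follows by mixing two admissible pairs as $B=\lambda B^1+(1-\lambda)B^2$ and $S_V=\lambda S_V^1+(1-\lambda)S_V^2+\lambda(1-\lambda)|B^1-B^2|^2S_Z$. This yields $S_X=\lambda S_X^1+(1-\lambda)S_X^2$ and $S_Y=\lambda S_Y^1+(1-\lambda)S_Y^2$. The paper's backoff is exactly the case $(S_V^2,B^2)=(0,0)$.
\item Finiteness follows from Jensen's inequality, $\mathcal R\le\frac12\log\int_{\mathbb T}S_Y\,d\mu-\frac12\int_{\mathbb T}\log S_Z\,d\mu$, together with $\int_{\mathbb T}S_Y\,d\mu\le P+(\|H\|_2+\sqrt P)^2$.
\end{itemize}
You only need left-continuity at $P$, and the inequality $C_\mathrm{FB}(tP)\ge tC_\mathrm{FB}(P)$ from the backoff already gives that. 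The paper's version is therefore more economical, while yours makes the underlying concavity explicit.
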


Before we give the proof of Theorem~\ref{thm:main}, we first identify the independent spectrum with a Schur defect and construct inner approximants. Theorem~\ref{theorem:pw-counterpart}, proved in Appendix~\ref{appendix:Paley-Wiener Condition}, converts these approximants into admissible polynomial filters. A power-backoff argument then proves Theorem~\ref{thm:main}. The exact output-spectrum approximation under a positive lower bound is stated separately in Corollary~\ref{app-cor:ess-inf-cod}.

\subsection{Spectral Factorization and the Schur Defect}
\begin{lem}\label{lamma-SY:label}
Let $(S_V,B)$ be an admissible pair under the Paley--Wiener condition \eqref{eq:PW-cond-All}, and let $H=\mathcal{O}_{S_Z}$. There is a unique normalized outer function $G\in \mathcal{H}^2$ such that $S_Y = |G|^2$ almost everywhere on $\mathbb{T}$. Define
\begin{equation*}\label{eq:F-and-phi}
F=(1+B)H,\qquad \phi=F/G.
\end{equation*}
Then, $F\in \mathcal{H}^2$, $\phi$ is a Schur function, and $H,G,F,\phi$ have real Taylor coefficients. Moreover,
\begin{align}
 S_V&=|G|^2(1-|\phi|^2)\quad\text{a.e.},\label{eq:schur-defect}\\
 \mathcal{R}(S_V,B)&=\log \frac{G(0)}{H(0)}=-\log\phi(0)\in[0,\infty),\label{eq:rate-constant}\\
 \mathcal{P}(S_V,B)&=\| G\|_2^2+\| H\|_2^2 -2\operatorname{Re}\IP{\phi G}{H}_{\mathcal{H}^2}.\label{eq:power-affine}
\end{align}
\end{lem}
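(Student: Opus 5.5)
We first build $G$ by showing that $S_Y$ admits a normalized outer spectral factor. Since $(1+B)H\in\mathcal{H}^2$ and $S_V\in L^1$, we have $S_Y=S_V+|1+B|^2S_Z\in L^1$, because $|B|^2S_Z\in L^1$ and $S_Z\in L^1$. The remaining requirement is $\log S_Y\in L^1$. The lower bound comes from $S_Y\ge |F|^2$ with $F=(1+B)H=H+BH$. Here $BH\in\mathcal{H}^2$ because its boundary modulus squared is $|B|^2S_Z\in L^1$; to place $BH$ in $\mathcal{H}^2$ rather than just $L^2$, note that $B\in\mathcal{H}^2$ and $H\in\mathcal{H}^2$ give $BH\in\mathcal{H}^1\subset\mathcal{N}^+$, and Smirnov's theorem then yields $BH\in\mathcal{H}^2$. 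Next, $F(0)=H(0)>0$ because $B(0)=0$, so $F\not\equiv0$ and $\log|F|\in L^1$ by \cite[Theorem 17.17]{rudin1987}. This gives $\log S_Y\ge 2\log|F|$, and from above $\log S_Y\le S_Y\in L^1$. Hence $\log S_Y\in L^1$ and $S_Y>0$ a.e., so $G=\mathcal O_{S_Y}$ exists, is unique, and is normalized. Evenness of $S_V$, $S_Z$ and realness of the $b_k$ make $S_Y$ even, so $G$ and $H$ have real coefficients. Then $F$ and $\phi=F/G$ do as well.

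Next we show that $\phi$ is Schur. $G$ is outer in $\mathcal{H}^2$, so $1/G\in\mathcal{N}^+$, and $\mathcal{N}^+$ is an algebra containing $\mathcal{H}^2$; hence $\phi=F/G\in\mathcal{N}^+$. On $\mathbb{T}$ we have $|\phi|^2=|F|^2/S_Y=|1+B|^2S_Z/(S_V+|1+B|^2S_Z)\le1$ a.e. Smirnov's maximum principle \cite[Lemma 2.3]{katsnelson1997theory} then gives $\phi\in\mathcal{H}^\infty$ with $\|\phi\|_\infty\le1$. The defect identity \eqref{eq:schur-defect} is immediate: $|G|^2(1-|\phi|^2)=S_Y-|F|^2=S_V$.

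The rate formula follows from the mean-value property of outer functions, $\log Q(0)=\int\log\omega\,d\mu$. This gives $\log G(0)=\tfrac12\int\log S_Y\,d\mu$ and $\log H(0)=\tfrac12\int\log S_Z\,d\mu$. Both integrals are finite, so $\mathcal R=\log(G(0)/H(0))$. Since $\phi(0)=F(0)/G(0)=H(0)/G(0)$, this equals $-\log\phi(0)$. Finally, $\phi(0)\in(0,1]$ because $\phi$ is Schur and $\phi(0)$ is a positive real number. Hence $\mathcal R\in[0,\infty)$, and $\mathcal R$ is finite.

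For the power, write $S_X=S_V+|B|^2S_Z=S_V+|F-H|^2$. Using \eqref{eq:schur-defect},
\[
S_X=|G|^2-|\phi G|^2+|\phi G|^2+|H|^2-2\operatorname{Re}(\phi G\overline H).
\]
This uses $F=\phi G$ and $|F-H|^2=|F|^2+|H|^2-2\operatorname{Re}F\overline H$. Every term is integrable, since $G,H\in\mathcal{H}^2$ and $\phi G\in\mathcal{H}^2$. Integrating over $\mathbb{T}$ gives \eqref{eq:power-affine}.

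The main obstacle is the delicate point of the first step: showing $\log S_Y\in L^1$ without a lower bound on $S_Z$, which requires knowing $F\in\mathcal{H}^2$ and $F\not\equiv0$. The argument above handles it by placing $BH$ in $\mathcal{H}^1$ and then applying Smirnov's theorem.
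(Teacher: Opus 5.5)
Your proposal is correct and follows the paper's proof essentially step by step: you use $BH\in\mathcal{H}^1\subset\mathcal{N}^+$ together with Smirnov's theorem to get $F\in\mathcal{H}^2$, the bounds $2\log|F|\le\log S_Y\le S_Y$ to get $\log S_Y\in L^1$, Smirnov's maximum principle to get $\|\phi\|_\infty\le 1$, and the same evaluations at the origin and boundary expansion to get the rate and power identities. The only variation is that you obtain $\phi\in\mathcal{N}^+$ from $1/G\in\mathcal{N}^+$ and the algebra property of $\mathcal{N}^+$. The paper instead writes $F/G=(MF)/(MG)$ with an explicit bounded outer multiplier $M$ of modulus $1/(1+|F|+|G|)$, and both routes are valid.
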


\begin{proof}
Since $B,H\in \mathcal{H}^2$, for each fixed $r\in(0,1)$, H\"older's inequality gives 
$$
\begin{aligned}
\int_{-\pi}^\pi|(BH)(re^{i\theta})|\,\frac{d\theta}{2\pi} &=\int_{-\pi}^\pi|B(re^{i\theta})|\,|H(re^{i\theta})|\,\frac{d\theta}{2\pi}\\
&\le \left(\int_{-\pi}^\pi|B(re^{i\theta})|^2\,\frac{d\theta}{2\pi}\right)^{1/2} \left(\int_{-\pi}^\pi|H(re^{i\theta})|^2\,\frac{d\theta}{2\pi}\right)^{1/2}\\
&\le \|B\|_{\mathcal{H}^2}\,\|H\|_{\mathcal{H}^2},
\end{aligned}
$$
which is independent of $r$. Taking the supremum over $r\in(0,1)$, we obtain
$$
\|BH\|_{\mathcal{H}^1}\le \|B\|_{\mathcal{H}^2}\,\|H\|_{\mathcal{H}^2}<\infty.
$$
Thus, $BH\in \mathcal{H}^1$. By the canonical factorization theorem \cite[Theorem 2.8]{duren1970theory}, $\mathcal{H}^1\subset \mathcal{N}^+$, and thus $BH\in \mathcal{N}^+$. Since $(S_V,B)$ have finite power, it holds that
$$
\int_\mathbb{T}|BH|^2d\mu=\int_\mathbb{T}|B|^2S_Zd\mu <\infty.
$$ 
The Smirnov theorem \cite[Theorem 2.11]{duren1970theory} therefore yields $BH\in \mathcal{H}^2$, and hence $F = H+BH\in \mathcal{H}^2$. Also, since $B(0)=0$, we have $F(0)=H(0)>0$ and thus $F$ is not identically zero. On the other hand, it follows from $F\in \mathcal{H}^2$ and $S_V(e^{i\theta})\in L^1$ that
\begin{equation}\label{SY-L1}
\begin{aligned}
S_Y(e^{i\theta})&=S_V(e^{i\theta}) + |1+B(e^{i\theta})|^2S_Z(e^{i\theta})\\
&=S_V(e^{i\theta}) + |F(e^{i\theta})|^2\in L^1.
\end{aligned}
\end{equation}
Then, by \cite[Theorem 17.17]{rudin1987}, it holds that
\begin{equation}\label{log-F-L1}
\log |F(e^{i\theta})|\in L^1,
\end{equation}
which shows that $|F(e^{i\theta})|>0$ almost everywhere. Since 
\begin{equation}\label{SY-F}
S_Y(e^{i\theta})=S_V(e^{i\theta}) + |F(e^{i\theta})|^2 \ge|F(e^{i\theta})|^2,
\end{equation}
it follows that $S_Y(e^{i\theta}) >0$ almost everywhere. It then follows from \eqref{log-F-L1} and \eqref{SY-F} that
$$
\log^-S_Y(e^{i\theta}) \le 2\log^-|F(e^{i\theta})|\in L^1,
$$
which, together with the fact that $\log^+S_Y(e^{i\theta})\le S_Y(e^{i\theta})$ and \eqref{SY-L1}, immediately implies that
\begin{equation}\label{Log-SY-L1}
\log S_Y(e^{i\theta}) =\log^+ S_Y(e^{i\theta}) -\log^-S_Y(e^{i\theta})\in L^1. 
\end{equation}
Here, we have used the notations $\log^+x=\max\{\log x,0\}$ and $\log^-x=\max\{-\log x,0\}$. By \eqref{eq:outer-factor}, there exists the unique normalized outer factor $G=\mathcal O_{S_Y}\in \mathcal{H}^2$ such that (see, e.g., \cite[Theorem 17.16]{rudin1987}, \cite[Theorem 2.4.1]{simon2005})
$$
S_Y(e^{i\theta})=|G(e^{i\theta})|^2
$$ 
almost everywhere, and moreover, 
\begin{equation}\label{SY-G0}
G(0)=\exp\left(\frac{1}{2}\int_{-\pi}^\pi\log S_Y(e^{i\theta})\frac{d\theta}{2\pi}\right)>0.
\end{equation}

Furthermore, note that $G$ is nonvanishing in $\mathbb{D}$ and thus $\phi=F/G$ is well-defined. Since $F\in \mathcal{H}^2$ and $G\in \mathcal{H}^2$ is outer, we claim that
\begin{equation}\label{eq:phi-multiplier}
\phi=\frac{F}{G}\in \mathcal{N}^+.
\end{equation}
To see this, define $\omega(e^{i\theta})$ as
$$
\omega(e^{i\theta})=\frac1{1+|F(e^{i\theta})|+|G(e^{i\theta})|}.
$$
Clearly, $\omega$ is a $\mu$-essentially bounded function, i.e.,
\begin{equation}\label{eq:ess-sup-o}
\operatorname*{ess\,sup}_{\theta\in[-\pi,\pi]}|\omega(e^{i\theta})|\le 1.
\end{equation}
Since $0\le-\log \omega\le|F|+|G|\in L^1$, there is a normalized outer function
\begin{equation}\label{eq:outer-fator-M}
M(z)=\exp\left\{\int_{-\pi}^\pi\frac{e^{i\theta}+z}{e^{i\theta}-z} \log \omega(e^{i\theta})\frac{d\theta}{2\pi}\right\},\quad z\in\mathbb{D}
\end{equation}
such that $|M(e^{i\theta})|=\omega(e^{i\theta})$ almost everywhere. By \cite[Theorem 17.16(a)]{rudin1987}, it holds that $\log |M|$ is the Poisson integral of $\log \omega$, i.e.,
\begin{equation}\label{eq:outer-fator-M-2}
\log |M(re^{i\varphi})|=\int_{-\pi}^\pi P_{r}(\varphi-\theta)\log\omega(e^{i\theta})\frac{d\theta}{2\pi},
\end{equation}
where $P_r(t)$ is the Poisson kernel such that (see, e.g., \cite[Section 11.5]{rudin1987})
\begin{equation}\label{eq:outer-fator-M-Poisson-kernel}
P_r(t)>0,\qquad \int_{-\pi}^\pi P_r(t)\frac{dt}{2\pi}=1, \quad 0\le r<1.
\end{equation}
Since $\log \omega \le 0$, it follows from \eqref{eq:outer-fator-M-2}--\eqref{eq:outer-fator-M-Poisson-kernel} that
\begin{equation}\label{eq:outer-fator-M-3}
\log|M(z)| \le 0.
\end{equation}
Thus, $M\in\mathcal{H}^\infty$ and $\| M\|_\infty\le1$. Set $a=MF$ and $b=MG$. Since $F,G\in \mathcal{H}^2$, we have $a,b\in \mathcal{H}^2\subset\mathcal{N}^+$ and $|a(e^{i\theta})|,|b(e^{i\theta})|\le1$ almost everywhere, which, together with the maximum principle of V.I. Smirnov (see also \cite[Lemma 2.3]{katsnelson1997theory}) implies that $a,b\in \mathcal{H}^\infty$. It then follows from \eqref{SY-L1}, \eqref{Log-SY-L1} and \eqref{eq:ess-sup-o} that
$$
\log \left(\omega(e^{i\theta})\sqrt{S_Y(e^{i\theta})}\right)=\log \omega(e^{i\theta}) +\log \sqrt{S_Y(e^{i\theta})}\in L^1,\quad \omega(e^{i\theta})\sqrt{S_Y(e^{i\theta})}\in L^2,
$$
which implies that $b(z)\in \mathcal{H}^2$ is the outer function given by

$$
b(z)=M(z)G(z)=\exp\left\{\int_{-\pi}^\pi\frac{e^{i\theta}+z}{e^{i\theta}-z} \log\left( \omega(e^{i\theta})\sqrt{S_Y(e^{i\theta})}\right)\frac{d\theta}{2\pi}\right\}.
$$
Therefore,
\begin{equation*}
 \frac FG=\frac{MF}{MG}=\frac ab\in \mathcal{N}^+,
\end{equation*}
which establishes the claim \eqref{eq:phi-multiplier}. Moreover, since
$$
\left |\frac{F(e^{i\theta})}{G(e^{i\theta})}\right|^2=\frac{|F(e^{i\theta})|^2}{S_Y(e^{i\theta})}\le 1,
$$
using the maximum principle of V.I. Smirnov (see also \cite[Lemma 2.3]{katsnelson1997theory}) again, we derive that $\phi \in \mathcal{H}^\infty$ and $\|\phi \|_\infty\le 1$. Thus, $\phi$ is a Schur function.

It follows from evenness of $S_Z$ and the uniqueness of normalized outer factor that $H(z)=\overline{H(\bar z)}$. The real coefficients of $B$, together with evenness of $S_V$ and $S_Z$, implies that $S_Y$ is even. Similarly, we can derive that $G(z)=\overline{G(\bar z)}$. Consequently, $H,G$ have real Taylor coefficients and it then follows algebraically that $F$ and $\phi$ have real Taylor coefficients as well.


To prove \eqref{eq:schur-defect}, we have 
\begin{equation}
\begin{aligned}
S_V(e^{i\theta}) &= S_Y(e^{i\theta})-|1+B(e^{i\theta})|^2S_Z(e^{i\theta})\\
&=S_Y(e^{i\theta}) - |1+B(e^{i\theta})|^2|H(e^{i\theta})|^2\\
&=|G(e^{i\theta})|^2 - |F(e^{i\theta})|^2\\
&=|G(e^{i\theta})|^2(1-|\phi (e^{i\theta})|^2),
\end{aligned}
\end{equation}
which proves \eqref{eq:schur-defect}. Note that 
$$
H(0)=\exp\left(\frac{1}{2}\int_{-\pi}^\pi\log S_Z(e^{i\theta})\frac{d\theta}{2\pi}\right)>0,
$$
which, together with \eqref{SY-G0}, immediately implies that
\begin{equation}\label{eq:rate-G-H}
\begin{aligned}
\mathcal{R}(S_V,B)&=\frac{1}{2}\int_{-\pi}^\pi\log \frac{S_Y(e^{i\theta})}{S_Z(e^{i\theta})}\frac{d\theta}{2\pi}\\
&= \log\frac{G(0)}{H(0)},
\end{aligned}
\end{equation}
and consequently,
\begin{equation}\label{eq:phi-0}
\begin{aligned}
\phi(0)&=\frac{F(0)}{G(0)}\\
&=\frac{H(0)}{G(0)}\in (0,1].
\end{aligned}
\end{equation}
As a result, equation \eqref{eq:rate-constant} follows from \eqref{eq:rate-G-H} and \eqref{eq:phi-0}. Finally,
\begin{equation*}
\begin{aligned}
\mathcal{P}(S_V,B)&=\int_{-\pi}^\pi (S_V(e^{i\theta})+|B(e^{i\theta})|^2S_Z(e^{i\theta}))\frac{d\theta}{2\pi}\\
&=\int_{-\pi}^\pi S_V(e^{i\theta})\frac{d\theta}{2\pi} +\int_{-\pi}^\pi |(BH)(e^{i\theta})|^2\frac{d\theta}{2\pi}\\
&=\int_{-\pi}^\pi \left(|G(e^{i\theta})|^2-|F(e^{i\theta})|^2 \right)\frac{d\theta}{2\pi}+ \int_{-\pi}^\pi |F(e^{i\theta})-H(e^{i\theta})|^2\frac{d\theta}{2\pi}\\
&=\| G\|_2^2-\| F\|_2^2 + \|F-H \|^2_2\\
&=\| G\|_2^2 + \|H\|^2_2 -2\mathrm{Re}\left\langle \phi G, H \right\rangle_{\mathcal{H}^2},
\end{aligned}
\end{equation*}
which proves \eqref{eq:power-affine}.
\end{proof}

Identity \eqref{eq:schur-defect} shows that $S_V= 0$ almost everywhere exactly when $\phi$ is inner. This observation motivates the following approximation.


\subsection{Inner Approximation and Power Convergence}
\begin{pr}\label{pr2-3:label}
Under the hypotheses of Lemma~\ref{lamma-SY:label}, there exist finite Blaschke products $\Theta_n$ with real Taylor coefficients such that
\begin{equation}\label{inner-approx-Theta-n}
\Theta_n(0)=\phi(0),\qquad \Theta_n\longrightarrow \phi
\end{equation}
locally uniformly in $\mathbb{D}$ and weak-star in $\mathcal{H}^\infty$. Moreover, if $Q_n=\Theta_nG$, then 
\begin{equation}\label{pr-inner-approx-SY-Qn}
|Q_n(e^{i\theta})|^2=S_Y(e^{i\theta})
\end{equation}
almost everywhere, 
\begin{equation}\label{pr-inner-approx-Qn-H}
Q_n(0)=H(0)
\end{equation}
for every $n\ge 1$ and 
\begin{equation}\label{pr-inner-approx-limit}
\lim_{n\to\infty}\|Q_n-H \|_2^2=\mathcal{P}(S_V,B).
\end{equation}
\end{pr}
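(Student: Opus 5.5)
The plan is to build the Blaschke products by normalizing at the origin, then deduce the properties of $Q_n$ from $|\Theta_n|=1$ and pass to the limit in the power via weak-star convergence.

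\textbf{Step 1: construction of $\Theta_n$.} If $\phi$ is already a unimodular constant, take $\Theta_n\equiv\phi=\phi(0)=1$. Otherwise $c:=\phi(0)\in(0,1)$, since $\phi(0)>0$ by \eqref{eq:rate-constant}. Form the real Möbius-normalized Schur function
$$
\psi=\frac{\phi-c}{1-c\phi},\qquad \psi(0)=0,
$$
which still has real coefficients. Then $\psi(z)=z\psi_1(z)$ with $\psi_1$ Schur and real.

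By the classical Carathéodory--Fejér/Fisher theorem, $\psi_1$ is the locally uniform limit of finite Blaschke products $\beta_n$. I will make them real by the following concrete route. Approximate $\psi_1$ by its Taylor partial sums, or rather by Fejér means $\sigma_n\psi_1$, which remain Schur and real. Then apply the Carathéodory construction: a real polynomial of degree $n$ with sup-norm at most $1$ admits a real finite Blaschke product, of degree at most $2n$, matching its first $n+1$ Taylor coefficients. The Schur algorithm with real parameters produces exactly such a product, since the Schur parameters of a real function are real.

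Set $\Theta_n=(z\beta_n+c)/(1+cz\beta_n)$. This is a finite Blaschke product with real coefficients, because Möbius composition preserves innerness and rationality. It satisfies $\Theta_n(0)=c=\phi(0)$, and $\Theta_n\to\phi$ locally uniformly. Since $\|\Theta_n\|_\infty=1$, local uniform convergence together with uniform boundedness gives weak-star convergence in $\mathcal{H}^\infty$. The reason is that the polynomials are dense in $L^1/\overline{H^1_0}$, and pairing against the monomials $z^k$ reduces to convergence of Taylor coefficients.

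\textbf{Step 2: properties of $Q_n$.} Since $|\Theta_n|=1$ on $\mathbb{T}$, we get $|Q_n|^2=|G|^2=S_Y$. Moreover $Q_n(0)=\phi(0)G(0)=H(0)$ by \eqref{eq:phi-0}.

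\textbf{Step 3: the power limit.} Expanding,
$$
\|Q_n-H\|_2^2=\|Q_n\|_2^2+\|H\|_2^2-2\operatorname{Re}\IP{\Theta_nG}{H}_{\mathcal{H}^2},
$$
and $\|Q_n\|_2^2=\|G\|_2^2$. The pairing is $\int_{\mathbb{T}}\Theta_n\,(G\overline H)\,d\mu$ with $G\overline H\in L^1(\mathbb{T})$. Weak-star convergence $\Theta_n\to\phi$ in $\mathcal{H}^\infty=(L^1/\overline{H^1_0})^*$ therefore gives convergence of this pairing to $\IP{\phi G}{H}_{\mathcal{H}^2}$. Comparing with \eqref{eq:power-affine} yields \eqref{pr-inner-approx-limit}.

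\textbf{Main obstacle.} I expect the main obstacle to be the weak-star step: one has to be careful that the pairing against a general $L^1$ function, not only against $H^1$ functions, is handled correctly. It is safe, because $\Theta_n-\phi\in\mathcal{H}^\infty$ annihilates $\overline{H^1_0}$, so the pairing only sees the class of $G\overline H$ in the quotient. A direct alternative I would try is to avoid weak-star altogether. Approximate $G\overline H$ in $L^1$ by trigonometric polynomials $p$. The error is then uniformly bounded by $2\|G\overline H-p\|_1$. For $p$ fixed, $\int\Theta_n p\,d\mu$ involves finitely many Taylor coefficients of $\Theta_n$, and these converge to those of $\phi$ by local uniform convergence (Cauchy estimates). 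I would write it this way, which also proves the weak-star claim. The realness of the Blaschke products in Step 1 is the other delicate point, handled by the real Schur algorithm.
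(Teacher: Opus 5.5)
Your proposal is correct and follows essentially the paper's route. You obtain real finite Blaschke approximants with $\Theta_n(0)=\phi(0)$ from the real Schur algorithm, upgrade local uniform convergence to weak-star convergence using $\|\Theta_n\|_\infty=1$, read off the identities for $Q_n$ from innerness, and get the power limit by pairing $\Theta_n$ against $G\overline{H}\in L^1(\mathbb{T})$ and comparing with \eqref{eq:power-affine}. Your M\"obius normalization, Fej\'er-mean detour and trigonometric-polynomial argument just make explicit what the paper takes from \cite{garcia2018finite}, \cite{simon2005} and \cite{sarason1966weak}. One small slip: under your bilinear pairing $\int_{\mathbb{T}}\Theta_n\,G\overline{H}\,d\mu$, $\mathcal{H}^\infty$ annihilates $z\mathcal{H}^1$, not its conjugate, but your direct argument does not depend on this.
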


\begin{proof}
By \cite[Theorem 4.1.1 and its proof]{garcia2018finite}, there is a sequence of finite Blaschke products $\Theta_n$ such that $\Theta_n(0)=\phi(0)$ for every $n\ge 1$ and converges uniformly on compact subsets of $\mathbb{D}$ to $\phi$, or equivalently, locally uniformly in $\mathbb{D}$ (see also \cite{dijksma2005generalized}). It then follows that $\Theta_n$ converges to $\phi$ at every point of $\mathbb{D}$, which, together with the fact that every finite Blaschke product $\Theta_n$ satisfies $\|\Theta_n \|_\infty=1$, immediately implies that $\{\Theta_n\}$ converges weak-star to the function $\phi$ in $\mathcal{H}^\infty$ \cite[Lemma 1]{sarason1966weak}. Also, since the Schur parameters of $\phi$ and the terminal value $1$ are real, the recursion \cite[(1.3.36)–(1.3.39)]{simon2005} gives real Taylor coefficients for each $\Theta_n$. It therefore proves \eqref{inner-approx-Theta-n}.

To prove \eqref{pr-inner-approx-SY-Qn} and \eqref{pr-inner-approx-Qn-H}, we note that $|\Theta_n(e^{i\theta}) |=1$ almost everywhere, then it holds that
$$
|Q_n(e^{i\theta})|^2=|\Theta_n(e^{i\theta})|^2|G(e^{i\theta})|^2= |G(e^{i\theta})|^2=S_Y(e^{i\theta})
$$
almost everywhere and 
$$
Q_n(0)=\Theta_n(0)G(0)=\phi(0)G(0)=F(0)=H(0),
$$
as desired.

We next prove \eqref{pr-inner-approx-limit}. Since $\{\Theta_n\}$ converges weak-star to the function $\phi$ in $\mathcal{H}^\infty$, it holds that for every $\omega(e^{i\theta})\in L^1(\mathbb{T})$
\begin{equation}\label{Theta_n-phi}
\lim_{n\to\infty}\int_{-\pi}^\pi\left(\Theta_n(e^{i\theta}) - \phi(e^{i\theta})\right)\omega(e^{i\theta})\frac{d\theta}{2\pi}= 0.
\end{equation}
For any $h\in \mathcal{H}^2$, noting
$$
\int_{-\pi}^\pi \left|G(e^{i\theta})\overline{h(e^{i\theta})}\right|\frac{d\theta}{2\pi}\le \|G \|_2 \| h\|_2<\infty,
$$ 
it follows from \eqref{Theta_n-phi} that
\begin{equation}\label{eq-weak-H2}
\begin{aligned}
\lim_{n\to\infty}\left\langle Q_n-\phi G, h \right\rangle_{\mathcal{H}^2} &= \lim_{n\to\infty}\int_{-\pi}^\pi (\Theta_n(e^{i\theta}) - \phi(e^{i\theta}))G(e^{i\theta})\overline{h(e^{i\theta})}\frac{d\theta}{2\pi}\\
&=0.
\end{aligned}
\end{equation}
Since $h$ is arbitrary, \eqref{eq-weak-H2} proves that $Q_n=\Theta_n G$ weakly converges to $\phi G=F$ in $\mathcal{H}^2$. Furthermore, since $\Theta_n$ is inner, it holds that
$$
\|Q_n\|_2^2=\int_{-\pi}^\pi|\Theta_n(e^{i\theta}) G(e^{i\theta})|^2\frac{d\theta}{2\pi}=\|G \|_2^2.
$$
This, combined with \eqref{eq-weak-H2} and Lemma~\ref{lamma-SY:label}, immediately establishes
\begin{equation}
\begin{aligned}
    \lim_{n\to\infty}\|Q_n-H \|_2^2&=\lim_{n\to\infty}\|Q_n \|_2^2 +\| H\|_2^2 - 2\mathrm{Re}\left\langle Q_n, H \right\rangle_{\mathcal{H}^2}\\
    &=\|G \|_2^2+\| H\|_2^2- 2\lim_{n\to\infty}\mathrm{Re}\left\langle \Theta_n G, H \right\rangle_{\mathcal{H}^2}\\
    &=\|G \|_2^2+\| H\|_2^2 - 2\mathrm{Re}\left\langle \phi G, H \right\rangle_{\mathcal{H}^2}\\
    &=\mathcal{P}(S_V,B),
\end{aligned}
\end{equation}
which proves \eqref{pr-inner-approx-limit}. This completes the proof.
\end{proof}


The next section gives two approximation counterparts of \cite[Corollary 4.4]{kim2010feedback}, under \eqref{eq:ess-inf-cond} and \eqref{eq:PW-cond-All}, respectively.


\section{Approximate Component Removal}\label{sec:removal_of_approximation}
We first give the following corollary, which is an approximation counterpart of \cite[Corollary 4.4]{kim2010feedback}. More specifically, a positive lower bound permits the inner approximants to be converted directly into admissible filters, preserving the output spectrum exactly.


\begin{co}\label{app-cor:ess-inf-cod}
Assume \eqref{eq:ess-inf-cond}, and let $(S_V,B)$ be admissible with finite power $q=\mathcal{P}(S_V,B)$ and rate $R=\mathcal{R}(S_V,B)$. There are real strictly causal filters $B_n\in z\mathcal{H}^2$ such that 
\begin{align}
 S_Y=S_V+|1+B|^2S_Z=|1+B_n|^2S_Z\quad a.e.~\mathrm{for~every}~n\label{eq:preserve}
\end{align}
and
\begin{equation}\label{B-hat-Power}
\lim_{n\to\infty}\int_{-\pi}^\pi|B_n(e^{i\theta})|^2S_Z(e^{i\theta})\frac{d\theta}{2\pi}= q.
\end{equation}
In particular,
\begin{equation}\label{B-hat-Rate}
\frac{1}{2}\int_{-\pi}^\pi\log|1+B_n(e^{i\theta})|^2\frac{d\theta}{2\pi}=R\quad\mathrm{for~every}~n.
\end{equation}
Consequently, for every $\varepsilon>0$, some $B_n$ has power at most $q+\varepsilon$ and rate exactly $R$.
\end{co}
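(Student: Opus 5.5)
The approach is to take the inner approximants of Proposition~\ref{pr2-3:label} and read each $Q_n=\Theta_nG$ as the spectral factor of a filter-only output. Concretely, I would define
\[
B_n=\frac{Q_n}{H}-1=\frac{\Theta_nG-H}{H}.
\]
Under \eqref{eq:ess-inf-cond} we have $|H|^2=S_Z\ge\delta$ a.e. Since $H$ is outer with boundary modulus bounded below, $1/H$ is outer with $|1/H|\le\delta^{-1/2}$ a.e. By Smirnov's maximum principle, $1/H\in\mathcal{H}^\infty$. Hence $B_n=(Q_n-H)/H\in\mathcal{H}^2$, because $Q_n-H\in\mathcal{H}^2$.

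Next I would check that $B_n$ is a real strictly causal filter. By \eqref{pr-inner-approx-Qn-H}, $Q_n(0)=H(0)$, so $B_n(0)=0$ and $B_n\in z\mathcal{H}^2$. The functions $\Theta_n$, $G$ and $H$ all have real Taylor coefficients by Lemma~\ref{lamma-SY:label} and Proposition~\ref{pr2-3:label}. Therefore $B_n$ has real coefficients too.

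For admissibility and \eqref{eq:preserve}, note that on $\mathbb{T}$,
\[
|1+B_n|^2S_Z=\frac{|Q_n|^2}{|H|^2}\,|H|^2=|Q_n|^2=S_Y \quad\text{a.e.},
\]
by \eqref{pr-inner-approx-SY-Qn}. The weighted power is
\[
\int|B_n|^2S_Z\,d\mu=\int|Q_n-H|^2\,d\mu=\|Q_n-H\|_2^2<\infty.
\]
So $(0,B_n)$ is admissible, and \eqref{B-hat-Power} follows directly from \eqref{pr-inner-approx-limit}.

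For the rate identity \eqref{B-hat-Rate}, I would write $\log|1+B_n|^2=\log S_Y-\log S_Z$ a.e. Both terms are integrable: $\log S_Y$ by the proof of Lemma~\ref{lamma-SY:label}, and $\log S_Z$ by \eqref{eq:PW-cond-All}. Integrating gives $\mathcal{R}(0,B_n)=\mathcal{R}(S_V,B)=R$. The final claim about power at most $q+\varepsilon$ then follows by choosing $n$ large in \eqref{B-hat-Power}.

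The one step needing care is the membership $B_n\in\mathcal{H}^2$, which is where the lower bound is genuinely used. To justify it, I would argue that $1/H$ lies in $\mathcal{N}^+$ as a quotient $1/H=(1\cdot M')/(HM')$ with a bounded outer $M'$, exactly as in the proof of Lemma~\ref{lamma-SY:label}. Alternatively, one can note directly that the reciprocal of an outer function is outer, given by \eqref{eq:outer-def} with $\omega=|H|^{-1}\in L^\infty$. Either route shows $1/H\in\mathcal{H}^\infty$, and the rest of the argument is routine.
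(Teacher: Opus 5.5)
Your proposal is correct and follows essentially the same route as the paper: you set $B_n=Q_n/H-1$, use the lower bound to get $1/H\in\mathcal{H}^\infty$, and then read off real strict causality, output-spectrum preservation, the rate identity and power convergence from Lemma~\ref{lamma-SY:label} and Proposition~\ref{pr2-3:label}. The only cosmetic difference is how $1/H\in\mathcal{H}^\infty$ is justified. The paper uses the Poisson-integral bound $\log|H(z)|\ge\frac12\log\delta$ on $\mathbb{D}$. You instead observe that $1/H$ is outer, hence in $\mathcal{N}^+$, with boundary modulus at most $\delta^{-1/2}$, and then apply Smirnov's maximum principle.
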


\begin{proof}
Let $\{Q_n\}$ be a sequence constructed in Proposition~\ref{pr2-3:label} and set
\begin{equation}\label{eq-Bn-construction}
B_n = \frac{Q_n}{H}-1.
\end{equation}
Since $S_Z(e^{i\theta})$ is bounded away from zero, i.e., $\operatorname{ess\,inf}_{{\mathbb{T}}} S_Z\ge\delta>0$, $S_Z$ also satisfies the Paley--Wiener condition \eqref{eq:PW-cond-All}. Following the same arguments in \eqref{eq:outer-fator-M}--\eqref{eq:outer-fator-M-3}, we obtain
\begin{equation}\label{eq:ess-inf-lower-H}
\log|H(z)|\ge \frac{1}{2}\log \delta, \quad z\in\mathbb{D},
\end{equation}
where $H$ is the normalized outer factor of $S_Z$. As a result, $1/H\in\mathcal{H}^\infty$ and then $B_n\in\mathcal{H}^2$. Moreover, it follows from $Q_n(0)=H(0)$ in \eqref{pr-inner-approx-Qn-H} that 
$B_n\in z\mathcal{H}^2$ is a real strictly causal filter. Therefore, it follows from \eqref{pr-inner-approx-SY-Qn} and \eqref{eq-Bn-construction} that for every $n$
\begin{equation*}
\begin{aligned}
|1+B_n|^2S_Z = \frac{|Q_n|^2}{|H|^2}|H|^2=S_Y,
\end{aligned}
\end{equation*}
almost everywhere, which proves \eqref{eq:preserve} and \eqref{B-hat-Rate}.
To prove \eqref{B-hat-Power}, we note that
\begin{equation*}
    \begin{aligned}
        \mathcal{P}(0,B_n)&=\int_{-\pi}^\pi|B_n(e^{i\theta})|^2S_Z(e^{i\theta})\frac{d\theta}{2\pi}\\
        &=\int_{-\pi}^\pi\left|\frac{Q_n(e^{i\theta})}{H(e^{i\theta})}-1\right|^2 |H(e^{i\theta})|^2\frac{d\theta}{2\pi}\\
        &=\int_{-\pi}^\pi|Q_n(e^{i\theta}) - H(e^{i\theta})|^2\frac{d\theta}{2\pi}\\
        &=\|Q_n-H \|_2^2,
    \end{aligned}
\end{equation*}
which, together with \eqref{pr-inner-approx-limit} in Proposition~\ref{pr2-3:label}, immediately implies that 
\begin{equation}\label{B_n-power}
\lim_{n\to\infty} \mathcal{P}(0,B_n)= \mathcal{P}(S_V,B)=q,
\end{equation}
as desired.
\end{proof}

Note that the quotient $B_n=Q_n/H-1$ in \eqref{eq-Bn-construction} is analytic and vanishes at the origin. However, under the Paley--Wiener condition \eqref{eq:PW-cond-All} alone, $1/H$ need not be bounded, and $B_n$ may fail to belong to $\mathcal{H}^2$. The next theorem resolves this admissibility issue by using polynomial filters. It is an approximation counterpart of \cite[Corollary 4.4]{kim2010feedback}; its proof is given in Appendix~\ref{appendix:Paley-Wiener Condition}.


\begin{thm}\label{theorem:pw-counterpart}
Assume that $S_Z\in L^1(\mathbb{T})$ is even and satisfies the Paley--Wiener condition \eqref{eq:PW-cond-All}. Let $(S_V,B)$ be admissible with power $q=\mathcal{P}(S_V,B)$ and rate $R=\mathcal{R}(S_V,B)$. For every $\varepsilon,\eta>0$, there is a real strictly causal polynomial $\hat{B}$ such that
\begin{equation}\label{lemma:pw-two-bounds}
 \mathcal{P}(0,\hat{B})=\int_{\mathbb T}|\widehat B|^2 S_Z\,d\mu\le q+\varepsilon, \qquad \mathcal{R}(0,\hat{B})=\int_{\mathbb T}\log|1+\widehat B|\,d\mu\ge R-\eta.
\end{equation}
\end{thm}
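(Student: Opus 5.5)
The plan is to start from the inner approximants of Proposition~\ref{pr2-3:label} and then replace the quotient $Q_n/H$ by a polynomial. Under \eqref{eq:PW-cond-All} alone, $Q_n/H$ itself need not be admissible, so this replacement is needed. Power is controlled by $\mathcal{H}^2$ density of $\{pH\}$. Rate is controlled by Jensen's formula, using Hurwitz's theorem to track the zeros of $\Theta_n$.

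\emph{Step 1: fix an inner approximant.} Let $G,\phi$ be as in Lemma~\ref{lamma-SY:label}. If $\phi(0)=1$, then $R=0$ and $\widehat B=0$ works. Otherwise $\phi(0)\in(0,1)$. By \eqref{pr-inner-approx-limit}, fix $n$ with $\|Q_n-H\|_2^2\le q+\varepsilon/2$, where $Q_n=\Theta_nG$ and $Q_n(0)=H(0)$. Write $\Theta_n=\zeta\prod_{j=1}^N\frac{z-a_j}{1-\overline{a_j}z}$. Then
\[
\prod_j|a_j|=|\Theta_n(0)|=\phi(0)=e^{-R}
\]
by \eqref{eq:rate-constant}. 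Since $\phi(0)<1$, we have $N\ge1$.

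\emph{Step 2: polynomial approximation with the right normalization.} Since $H$ is outer, the density statement gives polynomials $p_k$ with $p_kH\to Q_n$ in $\mathcal{H}^2$. Evaluating at $0$, which is continuous on $\mathcal{H}^2$, gives $p_k(0)H(0)\to Q_n(0)=H(0)$, so $p_k(0)\to1$. Replace $p_k$ by $p_k/p_k(0)$; this still converges. Then replace $p_k(z)$ by $\tfrac12\bigl(p_k(z)+\overline{p_k(\bar z)}\bigr)$. This does not increase the distance to $Q_n$, because $Q_n$ and $H$ have real coefficients (by Lemma~\ref{lamma-SY:label} and Proposition~\ref{pr2-3:label}), and the symmetrization is an $\mathcal{H}^2$ contraction fixing real-coefficient functions. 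Now set $\widehat B_k=p_k-1$. This is a real strictly causal polynomial, and
\[
\mathcal{P}(0,\widehat B_k)=\|p_kH-H\|_2^2\to\|Q_n-H\|_2^2\le q+\varepsilon/2,
\]
so the power bound holds for large $k$.

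\emph{Step 3: the rate bound, which is the main obstacle.} Convergence in $\mathcal{H}^2$ only gives upper semicontinuity of $\int\log|f|$, which is the wrong direction. Instead use zeros.
\begin{itemize}
\item $\mathcal{H}^2$ convergence implies local uniform convergence in $\mathbb{D}$. Since $H$ has no zeros in $\mathbb{D}$, $p_k\to Q_n/H=\Theta_n\,(G/H)$ locally uniformly in $\mathbb{D}$.
\item The limit $\Theta_n\,(G/H)$ is analytic in $\mathbb{D}$, and its zeros there are exactly $a_1,\dots,a_N$ with multiplicity, because $G/H$ has no zeros.
\item Pick $r$ with $\max_j|a_j|<r<1$ such that the limit has no zeros on $|z|=r$. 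By Hurwitz's theorem, for large $k$ the polynomial $p_k$ has exactly $N$ zeros $z^{(k)}_1,\dots,z^{(k)}_N$ in $|z|<r$, and these converge to the $a_j$.
\item Apply Jensen's formula \eqref{eq:jensen-formula} to the entire function $p_k$. Zeros on $\mathbb{T}$ contribute nothing, and every additional zero in $\mathbb{D}$ contributes a nonnegative term $-\log|z|$. Hence
\[
\int_{\mathbb T}\log|1+\widehat B_k|\,d\mu=\log|p_k(0)|-\sum_{|z|<1,\,p_k(z)=0}\log|z|\ \ge\ -\sum_{j=1}^N\log|z^{(k)}_j|,
\]
using $p_k(0)=1$.
\item The right-hand side tends to $-\sum_j\log|a_j|=-\log\phi(0)=R$. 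So $\mathcal{R}(0,\widehat B_k)\ge R-\eta$ for large $k$.
\end{itemize}

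Choosing $k$ large enough that both bounds hold gives the required $\widehat B=\widehat B_k$ in \eqref{lemma:pw-two-bounds}.
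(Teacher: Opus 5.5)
Your proposal is correct and follows essentially the same route as the paper. You fix an inner approximant with $\|\Theta G-H\|_2^2\le q+\varepsilon/2$, approximate by real polynomial multiples of the cyclic outer function $H$, and recover the rate from Jensen's formula by tracking the zeros of $\Theta$ via Rouch\'e/Hurwitz. The differences are cosmetic: you enforce strict causality by dividing by $p_k(0)\to 1$, whereas the paper approximates $(\Theta G-H)/z$ and multiplies back by $z$; and you use Hurwitz convergence of the zeros, whereas the paper uses explicit disks with the bound $|a_j|+\rho_j$.
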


Theorem~\ref{theorem:pw-counterpart} removes the independent component $S_V$ with arbitrarily small errors in power and rate. In contrast, the literal optimizer-replacement assertion of \cite[Corollary 4.4]{kim2010feedback} asserts exact preservation of the output spectrum and power at an optimizer. The distinction is essential: an approximating sequence may approach the original power from above. We are now in a position to provide the proof of Theorem~\ref{thm:main}, in which our arguments absorb this excess into a strict power margin.

\section{Proof of the Main Theorem}\label{proof:proof_of_main_theorem}
\begin{proof}[Proof of Theorem~\ref{thm:main}]
By \eqref{ine-C0-CFB}, it suffices to prove that $C_0(P)\ge C_\mathrm{FB}(P)$. To achieve this, fix an admissible pair $(S_V,B)$ with $q=\mathcal{P}(S_V,B)\le P$ and $R=\mathcal{R}(S_V,B)$. For $0<t<1$, define
\begin{equation*}\label{eq:backoff}
 B_t=tB,\qquad S_{V,t}=tS_V+t(1-t)|B|^2S_Z.
\end{equation*}
Clearly, $(S_{V,t},B_t)$ is also an admissible pair. Then,
\begin{equation*}\label{eq:backoffidentities}
 \mathcal{P}(S_{V,t},B_t)=tq,\qquad
 S_{Y,t}=tS_Y+(1-t)S_Z.
\end{equation*}
By concavity of the logarithm,
\begin{equation*}
 \mathcal{R}(S_{V,t},B_t) =\frac12\int_\mathbb{T}\log\left(t\frac{S_Y}{S_Z}+1-t\right)d\mu \ge tR.
\end{equation*}
Applying Theorem~\ref{theorem:pw-counterpart} to $(S_{V,t},B_t)$ with $\varepsilon=(1-t)P/2$ and any $\eta>0$, we have a polynomial filter $\hat{B}_t$ such that
$$
 \mathcal{P}(0,\hat{B}_t)\le tq+\frac{(1-t)P}{2}\le\frac{1+t}{2}P<P
$$
and 
$$
\mathcal{R}(0,\hat{B}_t)\ge \mathcal{R}(S_{V,t},B_t)-\eta\ge tR-\eta.
$$
Since $\hat{B}_t$ is admissible in the definition of $C_0(P)$, we have
$$
C_0(P)\ge tR-\eta.
$$
Letting $\eta\downarrow 0$ and then $t\uparrow1$, we obtain
$$
C_0(P)\ge R.
$$
Taking the supremum over all admissible pairs $(S_V,B)$ of power at most $P$ and using Kim's characterization \eqref{eq:Kim}, we establish
$$
C_0(P)\ge C_\mathrm{FB}(P),
$$
which, together with \eqref{ine-C0-CFB}, immediately proves \eqref{eq:main}, as desired. Each constructed filter $\hat{B}_t$ is a real strictly causal polynomial. Taking the same suprema and limits with this restriction therefore proves the polynomial-sufficiency assertion as well.
\end{proof}

\begin{remark}\label{rem:scope}
Theorem~\ref{thm:main} justifies the filter-only capacity formula stated in \cite[Theorem 4.6]{kim2010feedback} under the Paley--Wiener condition \eqref{eq:PW-cond-All}. Theorem~\ref{theorem:pw-counterpart} is an approximation form of the component-removal assertion in \cite[Corollary 4.4]{kim2010feedback}. Under the additional lower bound \eqref{eq:ess-inf-cond}, Corollary~\ref{app-cor:ess-inf-cod} preserves the output spectrum and rate exactly while approximating the power. Neither approximation proves exact optimizer replacement, existence of a filter-only optimizer, or the assertion about every optimizer for nonwhite noise in \cite[Remark 4.5]{kim2010feedback}.
\end{remark}

\section{Interleaved AR(1) Noise and SK(2) Coding}\label{sec:sk2-special_channels}
For AWGN and AR(1) Gaussian channel, explicit capacity-achieving SK constructions are known \cite[Section V]{kim2010feedback}. To illustrate the present results, consider the stationary AR(2) noise process
$$
Z_i +\beta Z_{i-2} =W_i,\qquad 0<|\beta|<1,
$$
where $\{W_i\}$ is white Gaussian noise with unit variance. Its SDF satisfies
$$
S_Z(e^{i\theta}) = \frac{1}{|1 + \beta e^{2i\theta}|^2} \geq \frac{1}{(1 + |\beta|)^2} > 0,
$$
Thus both Theorem~\ref{thm:main} and Corollary~\ref{app-cor:ess-inf-cod} apply. It is clear to see that the even and odd subsequences are independent interleaved AR(1) channels. For this channel, \cite[Proposition 3.7]{su2026secondorder} proves that the SK(2) coding scheme achieves feedback capacity through two interleaved AR(1) refinements, providing an operational example complementing the general supremum result. This example lies in the positive-lower-bound subclass; the Paley–Wiener assumption in Theorem~\ref{thm:main} is strictly more general.

\section*{Appendices}
\appendix
\section{Proof of Theorem~\ref{theorem:pw-counterpart}}\label{appendix:Paley-Wiener Condition}


\begin{proof}
Let $H$ and $G$ be the normalized outer factors of $S_Z$ and $S_Y$, respectively, and let $\phi=(1+B)H/G$. It then follows from Lemma~\ref{lamma-SY:label} that  $H,G\in\mathcal H^2$ have real Taylor coefficients, $\phi$ is a Schur function, and
$$
R=\log\frac{G(0)}{H(0)}=-\log\phi(0)\ge 0.
$$
Clearly, if $R=0$, then $\widehat B=0$ satisfies \eqref{lemma:pw-two-bounds}. Therefore, throughout the remainder of the proof, we assume $R>0$ and fix $\varepsilon,\eta>0$.

It then follows from Proposition~\ref{pr2-3:label} that there is a finite Blaschke product $\Theta$ with real Taylor coefficients such that
\begin{equation}\label{eq:pf-PW-1}
 \Theta(0)=\frac{H(0)}{G(0)}=e^{-R}>0, \qquad \|\Theta G-H\|_2^2\le q+\frac{\varepsilon}{2}, 
\end{equation}
which implies that $D\triangleq\Theta G-H\in z\mathcal{H}^2$. Furthermore, let $K\triangleq D/z$ for $z\neq 0$ and $K(0)=D^\prime(0)$. Clearly, $K\in \mathcal{H}^2$ and $\|K\|_2=\| D\|_2$. Since the outer function $H$ is cyclic in $\mathcal{H}^2$ \cite[Theorem 17.23]{rudin1987}, there are polynomials $p_m$ such that
\begin{equation}\label{eq:poly-approx-pw}
  \lim_{m\to\infty}\|p_m H-K \|_2=0.
\end{equation}
In particular, the polynomials $p_m$ may be chosen to have real coefficients. Indeed, for $f\in \mathcal{H}^2$, define 
$$
f^\#(z)\triangleq \overline{f(\bar{z})}.
$$ 
Clearly, $\|f^\#\|_2=\|f\|_2$. It then follows from Lemma~\ref{lamma-SY:label} and Proposition~\ref{pr2-3:label} that $H,G,\Theta$ have real Taylor coefficients that $H^\#=H$ and $K^\#=K$, which implies that
$$
\left\lVert\frac{p_m+p_m^\#}{2}H-K\right\rVert_2 \le\frac{\|{p_mH-K}\|_2+\|(p_mH-K)^\#\|_2}{2} =\|p_mH-K\|_2,
$$
which, together with \eqref{eq:poly-approx-pw}, implies that theses polynomials $p_m$ can be taken real just replacing $p_m$ by $(p_m+p_m^\#)/2$, as desired. Consider these real polynomials $p_m$ and set $B_m=zp_m$. Clearly, each $B_m$ is a real strictly causal polynomial. Set
$$
q_m = \int_\mathbb{T}|B_m|^2S_Zd\mu.
$$
Clearly, $q_m<\infty$ since $B_m$ is bounded on $\mathbb{T}$ and $S_Z\in L^1$.
Then, by \eqref{eq:poly-approx-pw}, we have
\begin{equation}\label{eq:pf-PW-2}
\begin{aligned}
\lim_{m\to\infty}\|B_m H-D\|_2&=\lim_{m\to\infty}\|zp_m H-zK\|_2\\
&=\lim_{m\to\infty}\|p_mH-K \|_2\\
&=0,
\end{aligned}
\end{equation}
and thus
\begin{equation}\label{eq:pf-PW-3}
\begin{aligned}
\lim_{m\to\infty}q_m&=\lim_{m\to\infty}\|B_m H \|_2^2\\
&=\| D\|^2_2.
\end{aligned}
\end{equation}
Consequently, from \eqref{eq:pf-PW-1} to \eqref{eq:pf-PW-3}, we obtain that
\begin{equation}\label{eq:power-bound-pw-final}
q_m\le q+\varepsilon 
\end{equation}
for all sufficiently large $m$. 

We next establish the second inequality in \eqref{lemma:pw-two-bounds}. Let $U_m\triangleq1+B_m$ and $g\triangleq\Theta G/H$. Since the outer function $H$ has no zeros in $\mathbb{D}$, $g$ is analytic and $g(0)=1$. For $0<r<1$, let 
$$
c_r\triangleq\min_{|z|\le r}|H(z)|>0.
$$
Since $U_m-g =(B_mH-D)/H$, it holds that
\begin{equation}\label{eq:sup-converge-local}
\begin{aligned}
\sup_{|z|\le r} |U_m(z) -g(z)| &\le \frac{1}{c_r}\sup_{|z|\le r}|B_m(z)H(z)-D(z)|\\
&\le \frac{\| B_mH-D\|_2}{c_r\sqrt{1-r^2}},
\end{aligned}
\end{equation}
where the last inequality follows from Cauchy--Schwarz inequality and \cite[Theorem 17.12]{rudin1987}. Consequently, combining \eqref{eq:pf-PW-2} and \eqref{eq:sup-converge-local}, we have derived that 
\begin{equation}\label{eq:local-converge}
\lim_{m\to\infty}(1+B_m) =\frac{\Theta G}{H}
\end{equation}
locally uniformly in $\mathbb{D}$.


Since both $G$ and $H$ are nonvanishing in $\mathbb{D}$, the zeros of $g$ are precisely the finitely many zeros of $\Theta$, with the same multiplicities. Write
$$
\Theta(z) = \zeta \prod_{j=1}^{s} \left( \frac{z - a_j}{1 - \overline{a_j} z} \right)^{m_j}, \qquad |\zeta| = 1,
$$
where the $a_j$ are distinct, $m_j\ge 1$ and $m_j$ is the multiplicity of the zero $a_j$. Since $0<\Theta(0)=e^{-R}<1$, we have $s\ge 1$, $0 < |a_j| < 1$ for every $j$ and 
\begin{equation}\label{eq:R-mj-aj}
\sum_{j=1}^{s} m_j \log \frac{1}{|a_j|} = -\log |\Theta(0)| = R.
\end{equation}
Choose $\rho_j>0$ such that the closed disks
\begin{equation}\label{eq:pw-def-Delta-circle}
\overline{\Delta_j} = \{ z : |z - a_j| \leq \rho_j \}
\end{equation}
are pairwise disjoint, $\overline{\Delta_j}\subset \mathbb{D}\setminus\{0\}$ and satisfy
\begin{equation}\label{eq:pw-cond-Delta-circle}
\sum_{j=1}^{s} m_j \log \frac{1}{|a_j| + \rho_j} > R - \frac{\eta}{2}.
\end{equation}
Such $\rho_j$ always exist since
\begin{align*}
R - \sum_{j=1}^{s} m_j \log \frac{1}{|a_j| + \rho_j} &= \sum_{j=1}^{s} m_j \log \left( 1 + \frac{\rho_j}{|a_j|} \right) \\
&\leq \sum_{j=1}^{s} m_j \frac{\rho_j}{|a_j|},
\end{align*}
where the equality follows from \eqref{eq:R-mj-aj}. Consequently, the only zero of $g$ in $\Delta_j$ is $a_j$, with multiplicity $m_j$; in particular, the function $g$ is nonzero on each circle $\partial \Delta_j$. Let 
$$
d_j \triangleq \min_{z \in \partial \Delta_j} |g(z)| > 0.
$$
Since $\Gamma\triangleq\bigcup_{j=1}^s\partial \Delta_j$ is compact in $\mathbb{D}$, by the local uniform convergence in \eqref{eq:local-converge}, there exists $M$ such that
$$
\sup_{z \in \Gamma} |U_m(z) - g(z)| < \frac{\min_j d_j}{2} \qquad \text{for all }m \geq M.
$$
Then, simultaneously for every $j$ and every $m \geq M$, we obtain
$$
|U_m(z) - g(z)| < \frac{\min_j d_j}{2} < d_j \leq |g(z)| \qquad \text{for}~z \in \partial \Delta_j.
$$
It then follows from Rouch\'e's theorem \cite[Theorem 10.43(b)]{rudin1987} that, for all sufficiently large $m$, $U_m$ has exactly $m_j$ zeros in the interior of $\Delta_j$, counted with multiplicity. Denote these zeros of $U_m$ by $\alpha_{j,\ell}^{(m)}$, $1 \leq \ell \leq m_j$. It therefore follows from \eqref{eq:pw-def-Delta-circle} that
\begin{equation}\label{eq:pw-ineq-Delta-circle}
0 < |\alpha_{j,\ell}^{(m)}| < |a_j| + \rho_j < 1. 
\end{equation}
Since $U_m$ is a polynomial and $U_m(0)=(1 + B_m)(0) = 1$, by Jensen's formula \cite[Theorem 15.18]{rudin1987}, it holds that
\begin{equation}\label{eq:pw-jensen-formula}
\int_{\mathbb{T}} \log |U_m| \, d\mu = \sum_{\substack{U_m(\alpha)=0 \\ |\alpha|<1}} \log \frac{1}{|\alpha|},
\end{equation}
where the zeros are counted with multiplicity. Possible zeros on $\mathbb{T}$ have integrable logarithmic singularities and contribute zero to this identity; see \cite[Lemma 15.17 and Theorem 15.18]{rudin1987}. Thus, every term on the right of \eqref{eq:pw-jensen-formula} is nonnegative. Combining \eqref{eq:pw-cond-Delta-circle}, \eqref{eq:pw-ineq-Delta-circle} and \eqref{eq:pw-jensen-formula}, we obtain
\begin{equation}\label{eq:rate-bound-pw-final}
\begin{aligned}
\int_{\mathbb{T}} \log |1 + B_m| \, d\mu &\geq \sum_{j=1}^{s} \sum_{\ell=1}^{m_j} \log \frac{1}{|\alpha_{j,\ell}^{(m)}|} \\
&\geq \sum_{j=1}^{s} m_j \log  \frac{1}{|a_j| + \rho_j}\\
&> R - \eta/2 \\
&\geq R - \eta,
\end{aligned}
\end{equation}
for all sufficiently large $m$. Choose one index $m$ for which both \eqref{eq:power-bound-pw-final} and \eqref{eq:rate-bound-pw-final} hold, and set $\widehat B=B_m$. This proves \eqref{lemma:pw-two-bounds}.
\end{proof}

\bibliographystyle{ieeetr}
\bibliography{Kim_gap}

\end{document}